\documentclass[preprint,aps,prd,floatfix, showkeys,showpacs]{revtex4-2}
\usepackage{graphicx}
\usepackage{amsfonts}
\usepackage{amsmath}
\usepackage{rotating}
\usepackage{amssymb,amsthm}
\usepackage{xcolor}
\usepackage[normalem]{ulem}

\newtheorem{thm}{Theorem}[section]

\newtheorem{con}[thm]{Conjecture} 
\theoremstyle{definition} 
  
\theoremstyle{remark}  
  
\def\beq{\begin{eqnarray}}  
\def\eeq{\end{eqnarray}}  
\def\bsp{\begin{split}}  
\def\esp{\end{split}}

\begin{document}  

\title{Geometric surfaces: An invariant characterization of spherically symmetric black hole horizons and wormhole throats}

\author{D. D. McNutt} 
\email{david.d.mcnutt@uis.no}
\affiliation{Faculty of Science and Technology, University of Stavanger, N-4036 Stavanger, Norway}

\author{W. Julius}
\email{william$\_$julius1@baylor.edu} 
\author{M. Gorban}
\email{matthew$\_$gorban1@baylor.edu}
\author{B. Mattingly}
\email{brandon$\_$mattingly@baylor.edu}
\author{P. Brown}
\email{patrick.brown.ctr@afit.edu}
\author{G. Cleaver}
\email{gerald$\_$cleaver@baylor.edu}

\affiliation{Early Universe Cosmology and Strings (EUCOS) Group Center for Astrophysics, Space Physics  and Engineering Research (CASPER),  Baylor University,  Waco, TX 76798, USA}

\affiliation{ Department of Physics Baylor University,  Waco, TX 76798, USA}

%
%

\begin{abstract}   

We consider a spherically symmetric line element which admits either a black hole geometry or a wormhole geometry and show that in both cases the apparent horizon or the wormhole throat is partially characterized by the zero-set of a single curvature invariant. The detection of the apparent horizon by this invariant is consistent with the geometric horizon detection conjectures and implies that it is a geometric horizon of the black hole, while the detection of the wormhole throat presents a conceptual problem for the conjectures. To distinguish between these surfaces, we determine a set of curvature invariants that fully characterize the apparent horizon and wormhole throat. Motivated by this result, we introduce the concept of a geometric surface as a generalization of a geometric horizon and extend the geometric horizon detection conjectures to geometric surfaces. As an application, we employ curvature invariants to characterize three important surfaces of the line element introduced by Simpson, Martin-Moruno and Visser, which describes transitions between regular Vaidya black holes, traversable wormholes, and black bounces.    

\end{abstract} 

\keywords{black holes; wormholes; curvature invariants; spherical symmetry; horizons}

\pacs{ 04.20.Jb, 04.50.Kd, 04.70.\text{-}s, 04.70.Bw}

\maketitle

%

\section{Introduction}

Black holes and wormholes are fascinating and strange solutions of many gravitational theories. The defining feature of black hole solutions is the presence of an event horizon, the boundary of the region from which no information can be sent to a distant asymptotic external region \cite{Wald:1984rg}. In contrast, wormholes are objects that connect two or more distant regions of spacetime, or even separate universes, whose distinct regions are bounded by a wormhole throat \cite{morris88, Visser:1995cc}. While the interpretations of these two classes of solutions are drastically different, they do share a common property. They both admit a hypersurface that demarcates important regions of such solutions, i.e., the event horizon or the wormhole throat.

The standard definitions of both event horizons and wormhole throats can be difficult to implement in practice. The definition of the event horizon relies on the global structure of the spacetime and not just the local geometry.
Due to this, determining the location of an event horizon requires knowledge of the entire evolution of null curves. In particular, one must have knowledge not only of their local behaviour, but of the null curve's asymptotic future as well. Since the definition and determination of event horizons depends on the entire history and evolution of a spacetime, they are teleological objects \cite{Booth2005}.

Wormholes are not teleological; however, their definition also relies on global geometry. The topology of the entire spacetime manifold must be determined in order to distinguish between wormholes that connect distinct regions of the same spacetime and wormholes that connect two distinct spacetimes \cite{Hochberg:1998ha}. Since it is impossible for any observer to have global knowledge of a spacetime, it is natural to posit that there may be some local method of determining a quasi-local surface which will characterize and distinguish a black hole or wormhole.

For stationary black hole solutions, the event horizon can be identified with the Killing horizon, which is a local surface. However, for dynamical black hole solutions, the event horizon is not quasi-local and other quasi-local hypersurfaces must be used to define the boundary for such black holes. Two types of  quasi-local surfaces, marginally trapped tubes (MTTs) and trapping horizons (THs), have played an important role in the literature \cite{Booth2005, Senov}. In this paper, we will focus on an extension of MTTs known as future outer trapped horizons (FOTHs; also called apparent horizons) and these will be defined in the following subsection.  While apparent horizons and other quasi-local surfaces based on (marginally) trapped surfaces have been used in numerical general relativity \cite{thornburg2007event}, they are foliation dependent and therefore non-unique \cite{Booth2005}. 

The non-uniqueness of quasi-local horizons of black holes motivates the determination of invariant hypersurfaces that are independent of the choice of foliation. It has been conjectured that dynamical black holes admit a quasi-local hypersurface on which the curvature tensor and its covariant derivatives become more algebraically special. Such a hypersurface, called a {\it geometric horizon} (GH), can be invariantly defined by the vanishing of a particular set of curvature invariants \cite{ADA,Coley:2017vxb}. While several examples of dynamical black hole solutions have been shown to admit GHs, the exact definition of a GH has not been fully determined in terms of a particular set of curvature invariants. However, in the case of spherically symmetric black holes, the apparent horizon is a GH \cite{Coley:2017vxb} and hence is unique.

In a similar manner to the quasi-local horizons of black holes, the definition of a wormhole throat has an issue with foliation dependence. In the case of stationary or static wormhole solutions, the throat can be determined locally \cite{Hochberg:1997wp}. However, for dynamical wormholes there are several throat definitions that arise from imposing conditions on the expansion of ingoing or outgoing null geodesics and their derivatives \cite{Hochberg:1998ha, Hayward:2009yw, Tomikawa:2015swa}. As in the case of apparent horizons, such definitions of the wormhole throat depend on the foliation of spacetime and hence are non-unique. In this paper, we will consider the definition of a wormhole throat introduced by Hochberg and Viser \cite{Hochberg:1998ha} which describes the throat as the minimal two-surface where lightrays focus as they enter the surface and expand on the other side once they have passed through the throat \cite{Hochberg:1998ha}. We will call such surfaces Hochberg-Visser (HV) wormhole throats and define them in more detail in the following subsection. 


While apparent horizons play an important role in gravitational wave astronomy \cite{LIGO, Gralla:2017ufe}, wormholes remain theoretical objects since they require exotic matter that violates local energy conditions \cite{morris88, Visser:1995cc}. Apparent horizons and wormholes are equally important for the understanding of gravity and spacetime physics \cite{Hochberg:1997wp, hayward1994a} and have led to the refinement of definitions for special quasi-local surfaces in the literature \cite{Booth2005}. Since both apparent horizons and HV wormhole throats are defined in a similar manner using the expansion of null geodesics, it is natural to conjecture that there may be an invariant description of a wormhole throat similar to the GH conjecture. However, the current definition of a GH has two significant problems. 

The first problem is a matter of implementation. The definition of a GH is often given in terms of Cartan invariants \cite{kramer} which must be calculated in a certain prescribed coframe. In practice, determining such a coframe can be extremely difficult. It is possible that scalar polynomial curvature invariants (SPIs), which are truly frame independent, may be easier to use to define GHs in practice. SPIs have previously been used to examine the curvature structure of both black hole and wormhole solutions \cite{AbdelqaderLake2015, GANG, overduin2020curvature, Mattingly:2018vzl}.

The second problem is more significant and concerns the interpretation of these surfaces. For spherically symmetric solutions, the definition of a GH implies that a HV wormhole throat is a GH. From the GH conjectures, we would then erroneously conclude that a wormhole solution is in fact a black hole.  More generally, the definition of a GH cannot distinguish between Killing horizons, cosmological horizons (such as in the Reisner-Nordstrom (anti-) de Sitter solutions), or the ergosphere in Kerr metrics \cite{GANG}. Furthermore, a dynamical black hole solution may admit multiple GHs defined by the vanishing of distinct SPIs. For example, when a dynamical black hole solution is constructed by making a conformal transformation from a stationary black hole, the event horizon is determined by an SPI \cite{PageMcNutt,mcnutt2017}, but a second GH can also appear \cite{faraoni2011horizon}. Noting that the above surfaces are characterized as the zero set of curvature invariants, we will introduce the concept of a {\it geometric surface} (GS), which is a surface defined by the vanishing of a particular set of curvature invariants and further characterized by inequalities on additional non-zero curvature invariants. 

The outline of the paper is as follows: In the remainder of section I, we briefly define apparent horizons and HV wormhole throats. In section \ref{sec:1}, we introduce the line element for the imploding spherically symmetric spacetime and review the geometric preliminaries to define the apparent horizon, the GH and the HV  wormhole throat. In section \ref{sec:2}, we implement the Cartan-Karlhede (CK) algorithm to generate a set of Cartan invariants which define the GH and HV wormhole throat. In section \ref{sec:3}, we determine ratios of SPIs that characterize the GH and HV wormhole throat. In section \ref{sec:4}, we introduce an improved set of conjectures for GSs and illustrate the conjectures by considering the class of dynamical spherically symmetric metrics which describe the transitions between black bounces, regular Vaidya black holes, and traversable wormholes. In section \ref{sec:5}, we summarize our results and discuss future work. 

\subsection{Apparent horizons and wormhole throats}

In order to define these apparent horizons and wormhole throats, we introduce a complex-null frame basis $\{\ell, n, m, \bar{m}\}$ and its dual coframe $\{ n, \ell, \bar{m}, m \}$ such that the metric takes the form 
\beq g_{ab} = 2( m_{(a} \bar{m}_{b)} \label{gfrm} - \ell_{(a} n_{b)}) , \eeq
\noindent where round brackets around indices denote symmetrization. The expansions of the null directions $\theta_{(\ell)}$ and $\theta_{(n)}$ are then 
\beq \theta_{(\ell)} = \bar{q}^{\,ab} \nabla_a \ell_b \label{NullExpansionEll} \text{ and }
 \theta_{(n)} = \bar{q}^{\,ab} \nabla_a n_b \label{NullExpansionN}, \eeq
\noindent where $\bar{q}_{ab} = g_{ab} +2 \ell_{(a} n_{b)}$ is a local two-metric.

In the process of defining MTTs and THs, we will first introduce the concept of a {\it trapped surface}, $S$. This is a closed two-surface such that the expansions, $\theta_{(\ell)}$ and $ \theta_{(n)} $, along each of the two future-pointing null vectors normal to the surface, $\ell^a$ and $n^a$, are everywhere negative: 
\beq \theta_{(\ell)} < 0,\text{ and } \theta_{(n)} <0 \label{eqn:trappedS}. \eeq

\noindent In this case $\bar{q}_{ab}$ is the induced two metric on the surface $S$ for which $\ell^a$ and $n^a$ are normal vectors. As in equation \eqref{gfrm}, $\ell^a$ and $n^a$ are normalized to ensure they are outward/inward pointing null vector fields.

Using trapped surfaces, the original definition of an apparent horizon is given as the boundary of the union of all points that lie on some trapped surface \cite{P2}. Working with this definition is problematic in practice, as it relies on assumptions about asymptotic flatness and has problems with the smoothness of the apparent horizon \cite{Booth2005}. It is more common to work with {\it marginally trapped surfaces} (MTSs), which are closed two-surfaces for which the expansions, $\theta_{(\ell)}$ and $ \theta_{(n)} $, along each of the two future-pointing null vectors normal to the surface, satisfy: 
\beq \theta_{(\ell)} = 0,\text{ and } \theta_{(n)} <0 \label{eqn:MtrappedS}. \eeq

If the MTSs can be combined to produce a three-surface foliated by the MTSs, this is called a {\it marginally trapped tube} (MTT). If we additionally impose that $\mathcal{L}_n \theta_{(\ell)} <0$, then the MTT is a {\it future outer trapping horizon} (FOTH), which is frequently referred to as an  {\it apparent horizon} in the literature where it is used as a quasi-local analogue of a future event horizon \cite{hayward1994a, hayward1994b}. 
In summary, an apparent horizon is a smooth hypersurface that is foliated by MTSs such that the expansions relative to the ingoing and outgoing null foliation normal vector fields satisfy:

\beq \theta_{(\ell)} =0, ~\theta_{(n)} < 0, \text{ and } \mathcal{L}_n \theta_{(\ell)} = n^a \nabla_a \theta_{(\ell)} <0. \label{FOTH} \label{apparent def}\eeq

\noindent If the third condition is dropped, this becomes a dynamical horizon \cite{Booth2005}. 

The HV wormhole throat can be characterized in a similar manner using the expansion of ingoing and outgoing null geodesics. The HV wormhole throat is defined as the minimal two-surface where lightrays focus as they enter the mouth of the wormhole and expand on the other side once they have passed through the throat. This can be restated in terms of the expansion of the outward/inward pointing null vector fields which are normal to the surface: 
\beq \theta_{(\ell)} = 0 \text{~and~} \ell^a \nabla_a \theta_{(\ell)} \geq 0 \text{~or~} \theta_{(n)} = 0 \text{~and~} n^a \nabla_a \theta_{(n)} \geq 0  \label{HVthroat}.\eeq

\noindent In the cases considered here, it is not possible to have both expansions vanish simultaneously. Throughout this paper we will deliberately construct null frames such that $\ell^a$ is the null direction with vanishing expansion.


\section{Spherically symmetric wormholes and black holes} \label{sec:1}

The imploding spherically symmetric metric in advanced coordinates is given as \cite{Senov}:
\beq ds^2 = - e^{2\beta(v,r)} \left(1 - \frac{2m(v,r)}{R(r)}\right) dv^2 + 2 e^{\beta(v,r)}dv dr + R(r)^2 d \Omega^2, \label{SphSymMtrc} \eeq

\noindent where $m(v,r)$ is the mass function, $\beta(v,r)$ is an arbitrary function, and $d\Omega^2$ is the line-element for the 2-sphere. The remaining function $R(r)$ is of the form
\beq R(r) = \sqrt{r^2 + c^2}, \label{RegFun} \eeq 
\noindent where the parameter $c$ is a non-negative real-valued constant that determines if the metric is regular. In this form, all gauge freedom has been used, and in general, further simplification of the Einstein tensor is not possible. For example, in a perfect fluid solution, the fluid (or dust) will not, in general, be comoving.

We choose the two future pointing, radial, null, geodesic, contravariant vector fields as part of the frame basis 
\beq \ell = \partial_v + \frac12 \left( 1 - \frac{2m}{R}\right) e^{\beta}  \partial_r ,~~n = -e^{-\beta} \partial_r,  \label{SSln} \eeq
\noindent and complete the non-coordinate basis using the complex contravariant spatial vector

\beq m = \frac{1}{\sqrt{2}R} \partial_\theta + \frac{i}{\sqrt{2} R \sin(\theta) } \partial_\phi.  \label{SSmmb} \eeq
and its complex conjugate $ \bar{m} $. Relative to this null frame, the future null expansions are:
\beq \theta_{(\ell)} = \frac{e^{\beta} R_{,r} }{R}\left(1-\frac{2m}{R}\right),~~~ \theta_{(n)} = - \frac{2e^{-\beta} R_{,r}}{R}. \eeq

If the line-element represents a spherically symmetric black hole solution, the apparent horizon is defined according to equation \eqref{FOTH}.
If instead, we suppose the line-element describes a wormhole, then the wormhole throat is characterized by the two conditions in equation \eqref{HVthroat}. In both cases, the vanishing of $\theta_{(\ell)}$ implies that the relevant hypersurface for either the apparent horizon or wormhole throat is described by the equation $R(r)-2m(v,r) = 0$. We will denote this hypersurface as $\mathcal{H}$.


The gradient of the equation $R-2m=0$ yields the dual of the vector normal to this hypersurface:

\beq N_a = \nabla_a(R-2m) = (R_{,r}-2m_{,r}) dr - 2m_{,v} dv.  \eeq

The hypersurface will be timelike or spacelike, depending on the sign of the magnitude of the normal vector, or null if the magnitude is zero. Evaluating the norm on the surface and using equation \eqref{RegFun}, we obtain:
\beq |N|\big |_{\mathcal{H}}= g^{ab}N_a N_b\big |_{\mathcal{H}}= -4e^{-\beta}m_{,v}\left( \frac{r}{\sqrt{r^2+c^2}}-2m_{,r} \right) \bigg|_{\mathcal{H}}. \label{Normalnorm} \eeq

\noindent
To ensure that the resulting solution is dynamical, we assume that $m_{,v} \neq 0$. Then, the surface $\mathcal{H}$ will be spacelike, timelike, or null depending on the sign of $m_{,v}$ and the value of $m_{,r}$ on this surface.

The existence of the wormhole throat or apparent horizon affects the structure of the curvature tensor and its covariant derivatives. This is reflected in the vanishing of a particular scalar curvature invariant constructed from higher order SPIs  \cite{Karlhede:1982fj, Lake:2003qe}. An invariant originally constructed to detect the dynamical horizon for dynamical spherically symmetric black holes will also detect the wormhole throat \cite{Coley:2017vxb}.

\begin{thm}
For any spherically symmetric wormhole metric, the  wormhole throat $R(r)=2m(v,r)$  is detected by the vanishing of the first order SPI: 
\beq J \equiv 4 I_1 I_3  - I_5. \label{Jinvar}  \eeq 

\noindent where $I_1 = C_{abcd} C^{abcd},~~I_3 = C_{abcd;e} C^{abcd;e}$, and $I_5 = I_{1,a}I_{1}\!^{\,,a}$. 
\end{thm}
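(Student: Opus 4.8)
The plan is to exploit that spherical symmetry forces the Weyl tensor into Petrov type D with a single, real, nonzero Newman--Penrose scalar $\Psi_2$, so that $I_1$, $I_3$ and $I_5$ all reduce to expressions in $\Psi_2$, its gradient, and the optical scalars of the radial null congruences \eqref{SSln}. Write $C_{abcd}=\Psi_2\,\omega_{abcd}$ (where $\Psi_2\neq0$), with $\omega_{abcd}$ the normalized type-D Weyl structure, built algebraically from $\ell,n,m,\bar m,g$; it is real and invariant under the boosts $\ell\mapsto\lambda\ell,\ n\mapsto\lambda^{-1}n$ and the spins $m\mapsto e^{i\vartheta}m$, and in particular $\omega_{abcd}\omega^{abcd}=a_0$ is a nonzero constant. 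Hence $I_1=a_0\Psi_2^2$ and $I_5=I_{1,e}I_1^{,e}=4a_0^2\,\Psi_2^2\,\Psi_{2,e}\Psi_2^{,e}$.

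The key algebraic step is the reduction of $I_3$. Inserting $C_{abcd;e}=\Psi_{2,e}\,\omega_{abcd}+\Psi_2\,\omega_{abcd;e}$ and contracting, the two cross terms both contain the factor $\omega_{abcd}\nabla_e\omega^{abcd}=\tfrac12\nabla_e(\omega_{fghk}\omega^{fghk})=0$, so
\beq I_3=a_0\,\Psi_{2,e}\Psi_2^{,e}+\Psi_2^2\,Y,\qquad Y:=\omega_{abcd;e}\,\omega^{abcd;e}. \eeq
Substituting into $J=4I_1I_3-I_5$, the $\Psi_2$-gradient term cancels identically, leaving $J=4a_0\,\Psi_2^4\,Y$. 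The theorem therefore reduces to the claim $Y|_{\mathcal H}=0$.

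To evaluate $Y$ I would argue as follows. Since the Ricci rotation coefficients are $\mathfrak{so}(1,3)$-valued, $\nabla_e\omega_{abcd}$ splits into a part valued in the boost$\,\oplus\,$spin subalgebra --- which annihilates $\omega_{abcd}$, being boost- and spin-invariant --- and a ``null-rotation'' part; and in spherical symmetry, where $\kappa=\sigma=\tau=\pi=\nu=\lambda=0$, the latter is governed entirely by the two expansion spin coefficients $\rho=-\tfrac12\theta_{(\ell)}$ and $\mu=-\tfrac12\theta_{(n)}$. Hence $\nabla_e\omega_{abcd}$ is linear in $(\rho,\mu)$ with purely numerical coefficients, so $Y=c_1\rho^2+c_2\rho\mu+c_3\mu^2$; and since $Y$ has boost weight $0$ while $\rho^2$ and $\mu^2$ have boost weights $\pm2$, we get $c_1=c_3=0$ and $Y=c_2\,\rho\mu$, with $c_2\neq0$ (as one verifies on Schwarzschild, where this procedure gives $J=55296\,M^4(r-2M)/r^{15}\not\equiv0$). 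Because $\rho\propto\theta_{(\ell)}\propto 1-2m/R$ from the frame \eqref{SSln} and the metric \eqref{SphSymMtrc}, this gives $J=4a_0c_2\,\Psi_2^4\,\rho\mu\ \propto\ \Psi_2^4\,\theta_{(\ell)}\theta_{(n)}$ (an identity valid at all points, by continuity), which vanishes precisely when $\theta_{(\ell)}=0$, i.e. on $\mathcal H:R=2m$. As $\Psi_2$, $\theta_{(n)}$ and $c_2$ are generically nonzero, $J\neq0$ away from $\mathcal H$, so $J$ detects the throat.

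I expect the third step to be the main obstacle: making rigorous the assertion that the boost$\,\oplus\,$spin part of the connection genuinely drops out of $\nabla_e\omega_{abcd}$ and that no spin coefficient other than $\rho,\mu$ survives. This is conceptually clean but demands careful handling of conventions (the explicit form of $\omega_{abcd}$, the NP/GHP frame-derivative identities, the round-sphere coefficients $\alpha,\beta_{\mathrm{NP}}$). The quickest fully rigorous alternative is to compute $I_1,I_3,I_5$ directly for \eqref{SphSymMtrc} with arbitrary $\beta(v,r),m(v,r)$ in a computer-algebra system and display $(R-2m)$ as an overall factor of $J$, after checking that the remaining cofactor is not identically zero. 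Finally, one should note that $\{J=0\}$ may be strictly larger than $\mathcal H$ --- it also contains $\{\Psi_2=0\}$ and $\{\theta_{(n)}=0\}$ --- consistent with the ``partially characterized'' phrasing of the abstract, and to be sharpened by the further invariants constructed later in the paper.
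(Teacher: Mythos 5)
Your argument is correct, but it reaches the key identity by a genuinely different route than the paper. The paper's proof is a one-line direct computation: working in the coframe \eqref{SSln}--\eqref{SSmmb} it evaluates $I_1$, $I_3$, $I_5$ explicitly (in effect by computer algebra, the fallback you mention) and observes that $J/I_1^2 \propto \theta_{(\ell)}\theta_{(n)}$, a relation invariant under boosts and spins. You instead derive the same identity structurally: writing $C_{abcd}=\Psi_2\,\omega_{abcd}$ with $\omega$ the boost- and spin-invariant type-D structure gives $I_1=a_0\Psi_2^2$, $I_5=4a_0^2\Psi_2^2\,\Psi_{2,e}\Psi_2^{\,,e}$ and $I_3=a_0\,\Psi_{2,e}\Psi_2^{\,,e}+\Psi_2^2 Y$ (the cross term dying because $\omega_{abcd}\omega^{abcd}$ is constant), so the gradient terms cancel in $J=4I_1I_3-I_5$ --- which explains why exactly this combination, with its factor of $4$, is the detecting invariant --- and the isotropy/boost-weight argument then forces $Y=c_2\,\rho\mu$: the boost$\,\oplus\,$spin part of the connection annihilates $\omega$, the surviving null-rotation components in this frame are carried only by $\rho$ and $\mu$ (since $\kappa=\sigma=\tau=\pi=\nu=\lambda=0$), and the $\rho^2$, $\mu^2$ contractions vanish because a frame-built constant of nonzero boost weight must be zero; that argument is sound as stated. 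Your Schwarzschild check ($c_2\neq 0$, $J=55296\,M^4(r-2M)/r^{15}$) agrees exactly with the paper's $J=2^{12}3^3\rho\mu\Psi_2^4$ in \eqref{JCartan}, so both proofs land on the same formula. What the paper's computation buys is brevity and freedom from the representation-theoretic bookkeeping you rightly flag as the delicate step; what your derivation buys is insight and generality --- it applies to any metric admitting a type-D frame with real $\Psi_2$ and vanishing $\kappa,\sigma,\tau,\pi,\nu,\lambda$, and it makes transparent both why $J\propto\Psi_2^4\,\theta_{(\ell)}\theta_{(n)}$ and why the zero set of $J$ can exceed $\mathcal{H}$ (the $\Psi_2=0$ and $\theta_{(n)}=0$ branches), consistent with the paper's ``partially characterized'' caveat. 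The only minor discrepancy is your sign convention $\mu=-\tfrac12\theta_{(n)}$ versus the paper's \eqref{SSmu}, which is immaterial for the zero-set statement.
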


\begin{proof}
Relative to the coframe $\{ n,\ell, \bar{m}, m\}$ given in equations \eqref{SSln} and \eqref{SSmmb} we can compute the SPIs explicitly and show that:
\beq \theta_{(\ell)} \theta_{(n)} \propto  \frac{4 I_1 I_3  - I_5}{I_1^2} = \frac{J}{I_1^2}. \nonumber \eeq 
\noindent This relationship is invariant under boosts and spins of the complex null frame. 
\end{proof}

\noindent We may normalize the above SPI to produce a new invariant, $\tilde{J}$, whose vanishing is necessary and sufficient to detect a wormhole throat or apparent horizon: \beq \tilde{J} = \frac{J}{2^4 3 I_1^2} = \theta_{(\ell)} \theta_{(n)}. \label{Jtilde} \eeq

In \cite{Coley:2017vxb}, the authors have argued that the vanishing of the SPI $J$ determines the GH for spherically symmetric black hole metrics. However, this invariant will vanish on the throat of a wormhole, which is certainly not a black hole solution. This distinction requires a refinement of the geometric horizon detection conjectures to discriminate between wormholes and black holes. In principle, a wormhole throat and GH could be distinguished by the flare-out condition or the trapping condition for wormhole throats and apparent horizons, respectively.

Computing the norm squared of the gradient of $\tilde{J}$, we have a new curvature invariant constructed from the ratio of first order and second order SPIs: 
\beq |\nabla \tilde{J}|^2 \big |_{\mathcal{H}} =  -\frac{r^4 m_{,v} e^{-\beta}}{(r^2 + c^2)^{5}} \left ( \frac{r}{\sqrt{r^2 +c^2}} - 2 m_{,r} \right)  \bigg|_{\mathcal{H}}. \eeq


\noindent We will show that this curvature invariant will partially determine the flare-out condition on the throat. In order to fully determine the flare-out or trapping condition, we must use an additional scalar curvature invariant.

\section{Calculation of Cartan invariants} \label{sec:2}

To provide an invariant characterization of the conditions that define a HV wormhole throat or an apparent horizon, we will apply the CK algorithm \cite{GANG} to generate a set of curvature invariants that fully characterize the geometry. Relative to the coframe given by equations \eqref{SSln} and \eqref{SSmmb}, the only non-zero component of the Weyl spinor is:
\beq \Psi_2 = -C_{abcd} \ell^a m^b n^c \bar{m}^d \label{SSpsi2} \eeq	



\noindent This implies that the parameters of the null rotations about $\ell^a$ and $n^a$ are fixed to identity and the remaining frame freedom consists of boosts and spins. Generally, the boost parameter can be fixed at zeroth order as well, since the non-zero Newman-Penrose (NP) curvature scalars for the Ricci spinor are: 
\beq \Phi_{00} = \frac12 R_{ab}\ell^a \ell^b,~~ \Phi_{11} = \frac14 R_{ab}(\ell^a n^b + m^a \bar{m}^b),~~\Phi_{22} = \frac12 R_{ab} n^a n^b,~~ \Lambda = \frac{R^a_{~a}}{24} \eeq

\noindent Thus, the Weyl tensor is of algebraic type {\bf D} and the Ricci tensor is generally of algebraic type {\bf I} ($\Phi_{00} \neq 0$) relative to the alignment classification \cite{classb}.

At zeroth order, the linear isotropy group of the Riemann tensor consists of spins \footnote{In fact, the spins belong to the isotropy group of the metric, so all higher covariant derivatives of the Riemann tensor are invariant under spins.} and potentially a boost if the Ricci tensor is of Segre type $[(11)(1,1)]$ or $[(111,1)]$ \cite{kramer}. If the linear isotropy group is two-dimensional, then the boost parameter can be fixed using the components of the covariant derivative of the Weyl tensor. 

Applying a boost with parameter $a=a(v,r)$ to the null coframe in equations (\ref{SSln}) and (\ref{SSmmb}) gives: 
\beq \ell' = a^2 \ell,~~ n' = a^{-2} n,~~ m' = m. \eeq

\noindent We can use the differential Bianchi identities to simplify the components of the covariant derivative of the Weyl tensor to express them in terms of the zeroth order NP scalars, the frame derivatives of the zeroth order NP scalars with respect to $\ell'$ and $n'$, and the following spin coefficients:

\beq  \epsilon' &=& a^2 \left[ \frac{R(R-2m) e^\beta \beta_{,r}}{2R^2} + \frac{\beta_{,v} }{2} - \frac{e^\beta m_{,r}}{2R} +\frac{R_{,r} e^{\beta} m }{2 R^2}  \right] \label{SSeps} \\ 
& & + a^2 D( \ln a), \nonumber \\
 \gamma' &=& \frac{1}{a^2} \Delta (\ln a),  \label{SSgam} \\
 \rho' &=& -\frac12 \theta_{(\ell')} = -a^2 \frac{ R_{,r} e^\beta(R-2m)}{2R^2},  \label{SSrho} \\ 
 \mu' &=& \frac12 \theta_{(n')}  = - \frac{R_{,r} e^{-\beta}}{a^2 R},  \label{SSmu} \eeq


\noindent where $D = \ell^a \partial_{x^a}$ and $\Delta =  n^a \partial_{x^a}$ are frame derivatives of the original unprimed frame. The boost parameter is specified by normalizing the components of the Ricci tensor using a boost, 
\beq & \Phi_{00}' = a^4 \Phi_{00},~ \Phi_{22}'  = a^{-4} \Phi_{22}, & \eeq
\noindent or if the Ricci tensor is of Segre type $[(11)(1,1)]$ or $[(111,1)]$, a boost can be used to normalize the components of the covariant derivative of the Weyl tensor,
\beq C_{1231;4}' = a^2 C_{1231;4},~C_{1232;4}' = a^{-2} C_{1232;4}, \eeq

Once the Ricci tensor or Weyl tensor is chosen to fix the boost parameter, there are two subcases that can occur. If both components are non-zero, the boost may be chosen to make them equal in the new frame. If one component vanishes, the boost parameter may be used to set the other component equal to 1. Hereafter, we will assume an appropriate normalization has been made and omit the primes on all NP quantities.

By fixing the boost parameter, an invariantly defined frame has been constructed. Relative to this frame, the components of the Ricci and Weyl tensors are now Cartan invariants. Furthermore, the components of the respective covariant derivatives of the Ricci and Weyl tensors are also Cartan invariants and we may express the following spin-coefficients as ratios of Cartan invariants:

\beq \rho = \frac{C_{1232;4}}{3\Psi_2}, \mu = \frac{C_{1231;4}}{3\Psi_2}. \eeq

\noindent The two NP spin coefficients are now Cartan invariants relative to the invariantly defined frame based on our choice of how the boost paramter is fixed. Finally, any frame derivative of Cartan invariants involving $D$ or $\Delta$ will be a Cartan invariant.

To show that the remaining conditions for an apparent horizon and HV wormhole throat can be given in terms of Cartan invariants, we consider a boost $\ell = \tilde{a}^2 \tilde{\ell}$, where ${\tilde{\ell}}$ is the geodesic null direction for which $ -2 \tilde{D} \tilde{\rho} = \tilde{D} \theta_{(\tilde{\ell})} \geq 0$ and $\tilde{\rho}$ always vanishes on the geometric horizon. The Leibniz rule implies that $D \rho \geq 0$ if and only if $ \tilde{D} \tilde{\rho}\geq 0$. It follows that $-D \rho$ is proportional to the standard flare-out condition on the geometric horizon and shares the same sign.

From equation \eqref{SSrho}, the conditions for an apparent horizon in \eqref{apparent def} can be restated in terms of the Cartan invariants:
\beq \rho = 0,~ \mu >0 \text{, and } \Delta \rho > 0, \label{CK_AH_conditions}\eeq

\noindent while the conditions for a HV wormhole throat in \eqref{HVthroat} become
\beq \rho = 0 \text{ and } \ D \rho \leq 0. \label{CK_HV_conditions} \eeq

\noindent Both the HV wormhole throat and the apparent horizon are geometric surfaces as they are characterized by conditions on curvature invariants, namely the Cartan invariants.

As a final remark, the flare-out condition for the HV wormhole is a property of the bundle of out-going null geodesics and not a property of the geometry of the solution itself \cite{Hochberg:1998ha}. In the context of the CK algorithm, the invariant null vector field $\ell^a$ is no longer required to be geodesic and the analogous flare-out condition now arises from the geometry of the solution. We note that in the spherically symmetric case, the invariant null vector fields are geodesic but are no longer affinely parametrized. Despite this difference in interpretation, the HV wormhole conditions can be recovered from the formulation in terms of Cartan invariants.

\section{ Ratios of scalar polynomial curvature invariants} \label{sec:3}

The characterization of the apparent horizon and HV wormhole throat in terms of the Cartan invariants given, respectively, in equations \eqref{CK_AH_conditions} and \eqref{CK_HV_conditions}  has the advantage of generating invariants that will only detect the wormhole throat. However, it has the disadvantage that it relies on the choice of a particular class of coframes. It is preferable to determine invariant conditions for the wormhole throat in terms of SPIs or ratios of SPIs, which we will call scalar rational curvature invariants (SRIs), as these are independent of the choice of coframe. To accomplish this, we will express the SRIs in terms of Cartan invariants. For example, the SPI $J$ from \eqref{Jinvar} takes the form: 

\beq J =  (2^{12}) (3^3) \rho \mu \Psi_2^4 = (2^{11}) (3^3) \Psi_2^4 \left( \frac{r^2}{2 (r^2 + c^2)} \right) \left( 1 - \frac{2 m}{\sqrt{r^2 + c^2}} \right) . \label{JCartan} \eeq


To begin, we will determine a basis of SRIs at zeroth order by taking ratios of the following non-zero Carminati-McLenaghan (CM) invariants \cite{carminati1991algebraic}:

\beq m_1  &=& 2 \Psi_2 \Phi_{22} \Phi_{00} - 8 \Phi_{11}^2 \Psi_2, \\
m_2 &=& 2 \Psi_2^2 \Phi_{22} \Phi_{00} + 16 \Psi_2^2 \Phi_{11}^2,\\
r_2 &=& 6 \Phi_{11} \Phi_{00} \Phi_{22},  \\
w_1 &=& 6 \Psi_2^2, \\
w_2 &=& -6 \Psi_2^3.\eeq

\noindent The SRIs we will use are

\beq W_0&=&\frac{w_2}{w_1} = -\Psi_2, \\
R_1&=&\frac{2m_1}{W_0} + \frac{2m_2}{W_0^2} = 48 \Phi_{11}^2, \\
R_2&=&\frac{R_1}{6} - \frac{m_1}{W_0} = 2 \Phi_{00}\Phi_{22}, \\
R_3&=&\frac{r_2}{3 R_2} = \Phi_{11}.  \eeq

\noindent We note that the Cartan invariants $\Psi_2$, $\Phi_{00}\Phi_{22}$, and $\Phi_{11}$ are proportional to $W_0, R_2$, and $R_3$ respectively. We have included $R_1$ to cover the case where either $\Phi_{00}$ or $\Phi_{22}$ vanishes. In this case, we are only able to construct SRIs that are proportional to the Cartan invariants $\Psi_2$ and $\Phi_{11}^2$. 

In order to construct a helpful basis of SRIs that are in terms of components of the covariant derivatives of the Weyl and Ricci tensors, we choose a different invariant frame by fixing the boost parameter to normalize: 
\beq \mu = -1. \label{Alt frame fixing} \eeq

\noindent  This is equivalent to fixing the component, $C_{1231;4} = -3 \Psi_2$. In doing so, the spin coefficient $\rho$ is fixed to 
\beq \rho = - \tilde{J}, \eeq


\noindent where $\tilde{J}$ is the SRI from \eqref{Jtilde}. From  the transformation rule for $\rho$, given in \eqref{SSrho}, and relative to this frame, the corresponding Cartan invariant will still vanish on the wormhole throat. With this choice of frame, $-D \rho$ is still proportional to the standard flare-out condition on the geometric horizon and shares the same sign.

Using the Ricci identities, we find that two  spin-coefficients can be expressed in terms of SRIs and the Cartan invariant $\Phi_{22}$:
\beq \epsilon &=& \frac12\left(W_0  -\tilde{J}   \right) - 2 \Lambda, \label{RIep} \\
\gamma &=& \frac12 + \frac12 \Phi_{22}. \label{RIgam} \eeq

\noindent With these expressions, we can write $D\rho$ and $\Delta \rho$ in terms of SRIs and the Cartan invariants $\Phi_{00}$ and $\Phi_{22}$: 

\beq D \rho &=& -\tilde{J} W_0 + 2 \tilde{J}^2 - 2 \tilde{J} \Lambda + \Phi_{00}, \label{RIDrho} \\
\Delta \rho &=& - \tilde{J} \Phi_{22} + W_0 - 2 \tilde{J} - 2 \Lambda. \label{RIDelrho} \eeq

\noindent The norm squared of the exterior derivative of $\tilde{J}$ is 
\beq |\nabla \tilde{J}|^2 \propto D \tilde{J} \Delta \tilde{J}, \label{NormGradJt} \eeq

\noindent and it will give an SRI that encodes information about $D\rho$ and $ \Delta \rho$. We are only able to distinguish the relative sign between these Cartan invariants from the sign of $|\nabla \tilde{J}|^2$. Yet again, we will have to find another way to write the Cartan invariants in terms of SRIs. 

If either of the components $\Phi_{00}$ or $\Phi_{22}$ vanishes, then either equation \eqref{RIDrho} or \eqref{RIDelrho} yields an expression for $D\rho$ or $\Delta \rho$, respectively, in terms of SRIs. In the remainder of this section we will assume that both $\Phi_{00}$ and $\Phi_{22}$ are non-zero relative to the chosen frame. 

We now construct a vector field, denoted ${\bf P}_0$, by taking a linear combination of gradients of the zeroth order SRIs $W_0$,  $\Lambda$, and $R_3$. The resulting field is simplified via the Bianchi identities to give:

\beq {\bf P}_0 &=& \cfrac{1}{-2^{11} 3^2 W_0^2 }\nabla (48 W_0^2) + \cfrac{1}{2^6 3^1 W_0} \nabla \Lambda - \nabla R_3 \\
&=& -\cfrac{2R_3 + \tilde{J} \Phi_{22} + 3 W_0}{2^6 3^1  W_0} \: n + \cfrac{2 \tilde{J} R_3 - \Phi_{00} + 3 \tilde{J} W_0 }{2^6 3^1 W_0} \: \ell. \eeq

\noindent Taking the norm of ${\bf P}_0$ yields a first order SRI. Using equations \eqref{RIep} and \eqref{RIgam} and subtracting multiples of $\tilde{J}$ with zeroth order SRIs, we find a simpler first order SRI: 

\beq  (2 R_2 + 3 W_0) (\tilde{J}^2 \Phi_{22} + \Phi_{00}) &=& |P_0|^2 + \frac12 \tilde{J} R_2 + 9 \tilde{J} W_0^2 + 12 \tilde{J} W_0 R_3 \nonumber \\ && + 4 \tilde{J} R_3^2. \eeq

\noindent Assuming $2 R_2 + 3 W_0 \neq 0 $, we can express a linear combination of the Cartan invariants $\Phi_{22}$ and $\Phi_{00}$ in terms of SRIs:

\beq  J_1 = \tilde{J}^2 \Phi_{22} + \Phi_{00} =  \frac{|P_0|^2 + \frac12 \tilde{J} R_2 + 9 \tilde{J} W_0^2 + 12 \tilde{J} W_0 R_3 + 4 \tilde{J} R_3^2}{2 R_2 + 3 W_0}.  \eeq

If $2R_2 + 3 W_0 = 0$,  we can instead consider another expression for $J_1$, assuming $2  \tilde{J} - W_0 + 2 \Lambda \neq 0$:

\beq   J_1 = \frac{|\nabla \tilde{J}|^2 - 8  \tilde{J}^3 - \tilde{J}^2( 8 W_0 + 16 \Lambda  ) - \tilde{J}( 2 W_0^2 + R_2 + 8 W_0 \Lambda + 8 \Lambda^2 )}{2  \tilde{J} - W_0 + 2 \Lambda}.  \eeq

\noindent To exclude the possibility that both quantities in the denominator of $J_1$ vanish, we write them in their coordinate expressions to show that,
\beq 2 \tilde{J} - W_0 + 2 \Lambda = - (2 R_2 + 3 W_0) + \frac{1}{r^3} \nonumber \eeq
\noindent and note that both quantities cannot be zero simultaneously. Thus it is always possible to determine $J_1$.

With $J_1$, we can consider the combination of Cartan invariants $D\rho - \tilde{J} \Delta \rho$ and show that it is equal to a linear combination of SPIs: 

\beq J_2 = D\rho - \tilde{J} \Delta \rho &=& \tilde{J}^2 \Phi_{22} + \Phi_{00}  - 2 \tilde{J} W_0 + 4 \tilde{J} \Lambda + 4 \tilde{J}^2 \nonumber \\
&=& J_1 - 2 \tilde{J} W_0 + 4 \tilde{J} \Lambda + 4 \tilde{J}^2.  \eeq 

\noindent Noting that on the geometric surface we have $\tilde{J} |_{\mathcal{H}} = 0$, it follows that:

\beq J_2 |_{\mathcal{H}}  =  J_1 |_{\mathcal{H}} = D \rho |_{\mathcal{H}}   = - D \tilde{J}|_{\mathcal{H}}. \label{J2onsurface} \eeq

In the construction of $J_1$ and $J_2$, we have assumed that $\Phi_{00}$ and $\Phi_{22}$ are both non-zero. If one of these components vanishes, then we may use either equation \eqref{RIDrho} or \eqref{RIDrho} to generate an expression for either $D \tilde{J}$ or $\Delta \tilde{J}$:

\beq \begin{aligned}  D \tilde{J} &= \tilde{J} W_0 - 2 \tilde{J}^2 2 \tilde{J} \Lambda - \Phi_{00},  \\
\Delta \tilde{J} &= \tilde{J} \Phi_{22} - W_0 + 2 \tilde{J} + \Lambda. \end{aligned} \eeq

\noindent Then, using $|\nabla \tilde{J}|^2$, we are able to construct an SRI for the remaining quantity.

\section{The geometric surface conjectures for wormholes and black holes} \label{sec:4}

In the previous section we have shown that for any spherically symmetric metric describing a black hole or a wormhole, the geometric surface $\mathcal{H}$, defined by the vanishing of $\tilde{J}$ in \eqref{Jtilde}, can be characterized as an apparent horizon or as a HV wormhole throat by determining the signs of both $D \tilde{J}$ and $\Delta \tilde{J}$ using $|\nabla \tilde{J}|^2$ in equation \eqref{NormGradJt} and either $J_2$ in equation \eqref{J2onsurface} or the appropriate SRI expression from equations \eqref{RIDrho} and \eqref{RIDrho} when either $\Phi_{00}$ or $\Phi_{22}$ vanishes. If the metric describes a dynamical black hole, we have the following inequality: 
\beq \tilde{J}|_{\mathcal{H}} = 0 \text{ and }   \Delta \tilde{J} |_{\mathcal{H}} < 0. \eeq
\noindent If the metric describes a wormhole with a HV throat the inequality is instead
\beq \tilde{J}|_{\mathcal{H}} = 0 \text{ and } D \tilde{J} |_{\mathcal{H}} \geq 0. \eeq


This distinction between the apparent horizon and the HV wormhole throat suggests the following refinement for the geometric horizon conjecture: 

\begin{con}
A geometric horizon (GH) is a geometric surface defined by the vanishing of a curvature invariant proportional to $\theta_{(\ell)}$, i.e., $\tilde{I}_0 \propto \theta_{(\ell)}$, $\tilde{I}_0 =0$ relative to the invariantly defined complex null frame $\{ \ell, n, m, \bar{m} \}$, along with the inequalities on two curvature invariants $\tilde{I}_1  = \theta_{(n)}$ and $\tilde{I}_2 = \Delta \theta_{(\ell)} $:
\beq \tilde{I}_1 < 0 \text{ and } \tilde{I}_2 < 0. \eeq
\end{con}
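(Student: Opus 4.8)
The statement is formulated as a conjecture, so a complete proof in the general dynamical setting is not to be expected; the plan is to establish the characterization rigorously in the spherically symmetric case, as a corollary of Sections~\ref{sec:1}--\ref{sec:3}, and then to outline the steps, and the main obstruction, for the general claim.

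First I would verify that $\tilde{I}_0$, $\tilde{I}_1$, $\tilde{I}_2$ are genuine curvature invariants of the invariantly defined frame. Once the boost is fixed by the CK algorithm, $\Psi_2$, $C_{1231;4}$ and $C_{1232;4}$ are Cartan invariants, so the identifications $\rho = C_{1232;4}/(3\Psi_2) = -\tfrac12\theta_{(\ell)}$ and $\mu = C_{1231;4}/(3\Psi_2) = \tfrac12\theta_{(n)}$ show that $\tilde{I}_0 \propto \theta_{(\ell)}$ and $\tilde{I}_1 = \theta_{(n)}$ are ratios of Cartan invariants; since a $\Delta$-derivative of a Cartan invariant is again a Cartan invariant, $\tilde{I}_2 = \Delta\theta_{(\ell)} = -2\Delta\rho$ is one as well. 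The identity $\tilde{J} = \theta_{(\ell)}\theta_{(n)}$ in \eqref{Jtilde}, together with the SRI expressions \eqref{RIDelrho} and \eqref{J2onsurface} for $\Delta\rho$ and $D\rho$, further shows that these invariants, or the combinations defining them, are expressible through SPIs, so both the surface $\mathcal{H}$ and the signs in the inequalities are frame independent. In the natural invariant frame normalized by $\mu = -1$ one has $\theta_{(n)} \equiv -2$, so $\tilde{I}_1 < 0$ holds automatically and $\tilde{I}_0 = \tilde{J} \propto \theta_{(\ell)}$ exactly; the substantive remaining condition is then the sign of $\tilde{I}_2$, which by \eqref{NormGradJt} and \eqref{J2onsurface} can be read off SPIs.

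Next, in the spherically symmetric case I would show that $\tilde{I}_0 = 0$, $\tilde{I}_1 < 0$, $\tilde{I}_2 < 0$ reproduce, term by term, the apparent-horizon (FOTH) conditions \eqref{apparent def}: $\theta_{(\ell)} = 0$ locates $\mathcal{H}\colon R(r) = 2m(v,r)$, $\theta_{(n)} = -2e^{-\beta}R_{,r}/R < 0$ is the outer/trapped sign, and $\Delta\theta_{(\ell)} = \mathcal{L}_n\theta_{(\ell)} < 0$ is the third FOTH inequality. Combining this with the result of \cite{Coley:2017vxb} that $\mathcal{H}$ is the hypersurface on which the curvature becomes algebraically special for spherically symmetric black holes, and with the algebraic analysis of Section~\ref{sec:2} (Weyl type~{\bf D}, with the boost-weight structure of the Weyl and Ricci tensors degenerating across $\mathcal{H}$), establishes the conjecture in this setting as a theorem.

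For the general statement the program would be: run the CK algorithm on a generic dynamical black hole (for instance of Kerr--Vaidya type), isolate the lowest-order curvature invariant whose zero set detects the drop in algebraic type and show that it is proportional to the expansion $\theta_{(\ell)}$ of the invariant frame, express $\theta_{(n)}$ and $\Delta\theta_{(\ell)}$ as Cartan invariants in that frame, and verify that the outer condition ($\tilde{I}_1 < 0$) and the future condition ($\tilde{I}_2 < 0$) coincide with the trapping conditions. The hard part --- and the reason this remains conjectural --- is precisely this last equivalence: away from spherical symmetry there is no canonical single SPI playing the role of $J$, the invariantly defined frame need not be fixed by curvature invariants of such low order, and the implication ``the curvature specializes on a hypersurface $\Longleftrightarrow$ that hypersurface is a future outer trapping horizon'' cannot be established in general. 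The realistic strategy is therefore to accumulate confirming examples, as carried out for the Simpson--Martin-Moruno--Visser family in Section~\ref{sec:4}.
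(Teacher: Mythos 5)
The statement is a conjecture, and the paper offers no proof of it beyond the motivating analysis of Sections~\ref{sec:1}--\ref{sec:3}; your proposal correctly treats it that way and reconstructs essentially the same supporting argument the paper uses, namely that in spherical symmetry the invariantly defined frame makes $\rho$, $\mu$, and $\Delta\rho$ Cartan invariants (expressible through $\tilde{J}$, $|\nabla\tilde{J}|^2$, $J_2$) whose conditions $\rho=0$, $\mu>0$, $\Delta\rho>0$ reproduce the FOTH definition \eqref{apparent def}, while the general case is left as accumulating examples. This matches the paper's approach, so no substantive differences or gaps to report.
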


In the same way that we expect dynamical black hole solutions to eventually settle down to either a black hole with an isolated horizon (IH) or a stationary black hole \cite{AshtekarKrishnan}, it is feasible that a black hole solution could transition into a wormhole solution or vice versa \cite{hayward1999dynamic}. In the case of a black hole transitioning to a stationary state, we would expect the curvature invariants which characterize the GH to smoothly track the GH as it evolves into an IH or a Killing horizon. In a similar manner, the curvature invariants that detect the GH of a black hole which then transitions into a wormhole (or vice versa) should also be able to distinguish the wormhole throat.

In analogy with the definition of a geometric horizon,  we will introduce an invariantly defined surface for wormhole throats. We will replace the condition that the outgoing and ingoing null directions $\ell^a$ and $n^a$ are geodesic and normal to the wormhole throat with the condition that $\ell^a$ and $n^a$ are invariantly defined. 

\begin{con}
An invariant Hochberg-Visser (IHV) wormhole throat is a geometric surface defined by the vanishing of a curvature invariant proportional to $\theta_{(\ell)}$, i.e., $\tilde{I}_0 \propto \theta_{(\ell)}$, $\tilde{I}_0 =0$ relative to the invariantly defined complex null frame $\{ \ell, n, m, \bar{m} \}$, along with the inequalities on two curvature invariants $\tilde{I}_1 = \theta_{(n)}$ and  $\tilde{I}_3 = D \theta_{(\ell)} $:
\beq \tilde{I}_1 < 0,~ \text{ and } \tilde{I}_3 \geq 0. \eeq
\end{con}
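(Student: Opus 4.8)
Since the statement is a conjectural \emph{definition} rather than a theorem, what can be established rigorously is that the three scalars $\tilde I_0,\tilde I_1,\tilde I_3$ are genuine curvature invariants — so the surface they cut out is coframe-independent and hence a geometric surface in the sense of Section \ref{sec:1} — and that, wherever a bona fide Hochberg--Visser throat is defined (in particular for the family \eqref{SphSymMtrc}), the conditions $\tilde I_0=0$, $\tilde I_1<0$, $\tilde I_3\ge 0$ reproduce \eqref{HVthroat}. The plan is to build $\tilde I_0,\tilde I_1,\tilde I_3$ out of the Cartan invariants produced in Section \ref{sec:2} and then run the equivalence check.

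First I would invoke the frame fixing of Section \ref{sec:2}: the Weyl tensor is type \textbf{D}, so the null-rotation parameters about $\ell^a$ and $n^a$ are pinned, and the residual boost is fixed by normalizing either $\Phi_{00},\Phi_{22}$ or $C_{1231;4},C_{1232;4}$ (equivalently the choice $\mu=-1$ of Section \ref{sec:3}). Relative to that invariant frame $\rho=-\tfrac12\theta_{(\ell)}$ and $\mu=\tfrac12\theta_{(n)}$ are Cartan invariants, so I set $\tilde I_0:=-2\rho=\theta_{(\ell)}$, $\tilde I_1:=2\mu=\theta_{(n)}$, and $\tilde I_3:=D\tilde I_0=-2D\rho=D\theta_{(\ell)}$; since a $D$-derivative of a Cartan invariant is again a Cartan invariant, all three are invariant. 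For the spherically symmetric metric I would also exhibit them as scalar rational invariants, using $\tilde I_0\propto\tilde J$ from \eqref{Jtilde}, and recovering the signs of $\tilde I_1$ and $\tilde I_3$ on $\mathcal H$ from $|\nabla\tilde J|^2\propto D\tilde J\,\Delta\tilde J$ in \eqref{NormGradJt} together with $J_2$ in \eqref{J2onsurface} (or the $\Phi_{00}=0$, $\Phi_{22}=0$ substitutes from \eqref{RIDrho}--\eqref{RIDelrho}).

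Next comes the equivalence with \eqref{HVthroat}: $\tilde I_0=0$ is $\theta_{(\ell)}=0$, i.e. $\mathcal H:\,R(r)-2m(v,r)=0$; on $\mathcal H$, $\tilde I_1<0$ reads $\theta_{(n)}=-2e^{-\beta}R_{,r}/R<0$, which holds automatically for $r>0$; and $\tilde I_3\ge 0$ is the flare-out inequality $D\theta_{(\ell)}\ge 0$. \textbf{The one genuine obstacle} is that the invariantly fixed $\ell^a$ is not affinely parametrized in general — here it stays geodesic but acquires a non-affine parameter — so $\tilde I_3$ is not literally the Hochberg--Visser flare-out derivative, which is taken along the affine outgoing congruence $\tilde\ell$. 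I would resolve this exactly as in Section \ref{sec:2}: writing $\ell=\tilde a^2\tilde\ell$ and applying the Leibniz rule gives $D\rho=\tilde a^4\tilde D\tilde\rho+\tilde a^2\tilde\rho\,\tilde D(\tilde a^2)$, and since $\tilde\rho$ vanishes on $\mathcal H$ we get $D\rho|_{\mathcal H}=\tilde a^4\tilde D\tilde\rho|_{\mathcal H}$, so $\tilde I_3|_{\mathcal H}$ and $\tilde D\theta_{(\tilde\ell)}|_{\mathcal H}$ have the same sign. Combined with the automatic $\tilde I_1<0$, this is precisely the HV throat of \eqref{HVthroat}, so the IHV throat coincides with it for \eqref{SphSymMtrc}.

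I would close by delimiting the scope: for a general, non-spherically-symmetric metric the statement remains conjectural, since the very existence and uniqueness of the invariant frame — and the resulting expression of $\tilde I_0,\tilde I_1,\tilde I_3$ as SPIs or SRIs — rests on the Weyl and Ricci tensors being algebraically special enough on the candidate surface, which is exactly the content of the geometric horizon detection program. What the above argument actually proves is that the proposed definition is consistent and reduces correctly whenever a Hochberg--Visser throat exists, with the frame-fixing step together with the affine-versus-non-affine reconciliation being the part that does not obviously extend to the general setting.
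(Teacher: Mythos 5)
Your proposal is correct and follows essentially the same route as the paper: you rightly treat the statement as a conjectural definition, justify it by the Cartan--Karlhede frame fixing that makes $\rho=-\tfrac12\theta_{(\ell)}$, $\mu=\tfrac12\theta_{(n)}$ and $D\rho$ Cartan invariants, and recover the Hochberg--Visser conditions \eqref{HVthroat} via the boost $\ell=\tilde a^2\tilde\ell$ and the Leibniz-rule argument on $\mathcal{H}$, exactly as in Sections \ref{sec:2}--\ref{sec:3} (including the SRI expressions via $\tilde J$, $|\nabla\tilde J|^2$ and $J_2$). Your closing remark on the scope beyond spherical symmetry matches the paper's own framing of the statement as a conjecture.
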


In addition to the GH and the IHV wormhole throats, there are other surfaces which are defined as the zero-set of curvature invariants and are physically or geometrically important. For example, the ergosphere in Kerr spacetimes can be determined by the zero-set of certain SRIs constructed from the Weyl tensor \cite{GANG}. In order to take into account for these surfaces, we introduce a general conjecture for physically relevant surfaces: 

\begin{con}
If a surface has significance in the physical interpretation of a solution to a gravity theory then it is characterized by a GS with additional conditions imposed on the curvature invariants.   
\end{con}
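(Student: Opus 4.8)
This final statement is a conjecture of programmatic scope rather than a self-contained claim, so any ``proof'' must be understood as a two-stage reduction: first sharpen the hypothesis until it becomes verifiable, and then accumulate evidence across the known catalogue of physically distinguished surfaces. The plan is to restrict the phrase ``significance in the physical interpretation'' to surfaces whose defining conditions are built covariantly out of the spacetime geometry --- zero-sets and sign conditions on the expansions $\theta_{(\ell)}$, $\theta_{(n)}$ of the invariantly defined null congruences, on norms of (conformal) Killing vectors, on optical scalars, or on changes in the algebraic type of the curvature and its covariant derivatives --- and then to show that every such condition can be transcribed into conditions on curvature invariants.

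For the transcription the main engine is the Cartan--Karlhede construction already exploited in Sections \ref{sec:2} and \ref{sec:3}. Given a solution, the CK algorithm fixes an invariantly defined complex null frame $\{\ell, n, m, \bar{m}\}$ up to a residual (often trivial) isotropy group, and relative to this frame every component of the Riemann tensor and of its covariant derivatives, together with the frame derivatives $D$ and $\Delta$ of such components, is a Cartan invariant. The strategy for a given physically distinguished surface $\mathcal{S}$ is then: (i) write its defining function $f$ in the CK frame, obtaining $f$ as a polynomial or rational combination of Cartan invariants, exactly as $\theta_{(\ell)} \propto \rho = C_{1232;4}/(3\Psi_2)$ in this paper; (ii) conclude that $\{f=0\}$ is the zero-set of a curvature invariant, hence a GS; and (iii) transcribe any supplementary inequalities that single out $\mathcal{S}$ --- outer/inner conditions, flare-out, trapping, future/past character --- into sign conditions on the further Cartan invariants $\Delta f$, $D f$, $\theta_{(n)}$, as in equations \eqref{CK_AH_conditions}--\eqref{CK_HV_conditions}. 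Where possible one then re-expresses these as SRIs by the elimination procedure of Section \ref{sec:3}, yielding a frame-independent characterization.

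The inductive evidence would be built by running this procedure on the standard list: Killing and event horizons of stationary black holes via the SPIs of \cite{AbdelqaderLake2015, GANG}, apparent horizons and HV throats as in Sections \ref{sec:2}--\ref{sec:4}, cosmological horizons of the Reissner--Nordstr\"om--(anti-)de Sitter family, the ergosphere in Kerr \cite{GANG}, photon spheres and photon surfaces, and the transition surfaces of the Simpson--Martin-Moruno--Visser line element treated in Section \ref{sec:4}. Each success is an instance of the conjecture; each surface that resists the procedure would sharpen the hypothesis by isolating precisely which physical surfaces lie outside the reach of local curvature.

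I expect the main obstacle to be exactly that last point: the conjecture as stated is not falsifiable until ``physically significant surface'' is pinned down, and there are natural candidates for which no local transcription can exist --- a generic dynamical event horizon, which is teleological and need not coincide with any invariant locus; surfaces whose definition invokes a preferred observer or the global causal structure rather than the metric alone; or matter-defined surfaces in theories whose field equations do not algebraically tie the stress-energy to the curvature. A second, more technical obstruction is that the CK frame need not be rigidly fixed: when the linear isotropy group stays nontrivial at all orders (highly symmetric or very algebraically special solutions) not every frame component is a Cartan invariant, and a physically natural scalar may fail to be expressible through invariants; and even when the frame is fixed, the elimination from Cartan invariants to genuine SRIs can degenerate, as the vanishing of $2R_2 + 3W_0$ or of $2\tilde{J} - W_0 + 2\Lambda$ in Section \ref{sec:3} already shows, forcing a case analysis that may not close in full generality. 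A defensible outcome is therefore a conditional theorem --- for spacetimes whose CK frame is fixed by low-order data and for surfaces defined by covariant scalar conditions on the curvature and its first covariant derivative, such a surface is a GS --- with the general conjecture left as its natural but strictly stronger extension.
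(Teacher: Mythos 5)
The statement is offered in the paper as a conjecture, not a theorem: the paper supplies no proof, only supporting evidence --- chiefly the Section \ref{sec:4} example in which the physically important hypersurface $\mathcal{O}$ at $r=0$ is detected by the vanishing of $\tilde{J}$ and its character (traversable wormhole, one-way null throat, or black bounce) is fixed by the additional invariant $2\Lambda - W_0$, together with the apparent-horizon/HV-throat characterizations of Sections \ref{sec:2}--\ref{sec:3}. Your programmatic reading --- transcribing covariantly defined surfaces into Cartan invariants or SRIs via the Cartan--Karlhede algorithm, accumulating instances, and flagging that the hypothesis is too loosely stated to be falsifiable as it stands --- is essentially the same evidential approach the paper takes, so there is no gap to report.
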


\subsection{Regular Vaidya black holes, traversable wormholes, and black bounces}

In this section, we will consider a novel class of dynamical spherically symmetric metrics describing transitions between regular Vaidya spacetimes, traversable wormholes, and black bounces \cite{Lobo:2020ffi, Simpson:2019cer}. We will only examine examples with ingoing radiation. In this case, the metric is constructed by fixing $\beta(v,r)=0$ and $m(v,r)=m(v)$ in \eqref{SphSymMtrc}. Explicitly, this gives

\beq ds^2 = -  \left(1 - \frac{2m(v)}{\sqrt{r^2+c^2}}\right) dv^2 + 2 dv dr + (r^2+c^2) d \Omega^2, \label{SMMVmetric} \eeq

\noindent where the mass function $m$ is dependent only on the null coordinate $v$, and $c$ is a positive real-valued constant. When $m(v)<\frac{c}{2}$ this metric describes a regular black hole and $c$ determines the size of the non-singular core. When $m(v)>\frac{c}{2}$, this metric describes a wormhole and $c$ determines the throat radius. In the case that $m(v)$ is increasing in $v$ and crosses the value $\frac{c}{2}$, the metric will transition from a regular black hole to a wormhole due to an accretion of null dust. We note that in the outgoing case the metric can describe an evaporating black hole which leaves a wormhole remnant. 

We will determine the geometric surfaces associated with the stages that describe either a wormhole throat or an apparent horizon at particular values of $r$ and $v$. To compute the Cartan invariants for these solutions we would start with the coframe given by equations \eqref{SSln} and \eqref{SSmmb}, with the appropriate simplifications. Since the matter content has $\Phi_{00}$ and $\Phi_{22}$ as non-zero, the boost parameter is chosen to normalize $\Phi_{00} = \Phi_{22}$. Instead of the Cartan invariants, we will compute the SRIs from the previous section:


%

\beq \begin{aligned} \tilde{J} & = \frac{r^2 (\sqrt{r^2+c^2} - 2 m)}{2 \sqrt{r^2+c^2} (r^2+c^2)^2}, \\ 
J_1 & = \frac{r^2(2m_{,v} (r^2 + c^2) r \sqrt{r^2 +c^2} -4 c^2 m(m - \sqrt{r^2+c^2}) - c^4 -c^2 r^2 )}{(r^2+c^2)^5},  \\ 
J_2 & = J_1 - 2 \tilde{J}( W_0 + 4 \Lambda + 4 \tilde{J}), \end{aligned} \eeq

\noindent where 

\beq W_0 = -\frac{3 m(c^2 - 2 r^2 )\sqrt{r^2+c^2} -2c^4 -2 c^2 r^2}{6(r^2 + c^2)^3},~~\Lambda = \frac{(3m- \sqrt{r^2+c^2})c^2 }{12 \sqrt{r^2+c^2} (r^2+c^2)^2} . \eeq

\noindent The familiar geometric surface, $\mathcal{H}$, is in this case defined by the solution set to $\tilde{J} = 0$. Explicitly, this occurs when

\beq r_{\mathcal{H}} (v) = \pm \sqrt{4m^2 - c^2}. \eeq

\noindent The signs of $J_2|_{\mathcal{H}}$ and $|\nabla \tilde{J}|^2|_{\mathcal{H}}$ will determine if $\mathcal{H}$ is a wormhole throat or geometric horizon.


We will also point out the existence of one other important hypersurface, $\mathcal{O}$, located at $r=0$. This hypersurface has an induced 3-metric of the form

\beq ds^2|_{\mathcal{O}} = -  \left(1 - \frac{2m(v)}{c}\right) dv^2 + c^2 d \Omega^2, \label{SMMV3metric} \eeq

\noindent which describes a cylinder with three possible signatures depending on the ratio:
\begin{itemize}
\item $\frac{2m}{c}<1$: $\mathcal{O}$ is timelike and is a traversable wormhole.
\item $\frac{2m}{c}=1$: $\mathcal{O}$ is null and is a one-way wormhole with a null throat.
\item $\frac{2m}{c}>1$: $\mathcal{O}$ is spacelike and is a black-bounce.  
\end{itemize} 

\noindent Here, a bounce describes the transition into a future incarnation of the Universe. We note that the nature of the hypersurface will change as $v$ varies.


The invariant $\tilde{J}$ will detect the hypersurface $\mathcal{O}$. However, $J_1$ and $J_2$ will vanish on $\mathcal{O}$ as well, and we cannot determine the signature of this surface. Due to the importance of this hypersurface in terms of its physical interpretation, we expect that its signature should be detected using some curvature invariant. This expectation is justified by considering the invariant $2\Lambda - W_0$ and evaluating it on the surface $r=0$:

\beq (2\Lambda - W_0) |_{\mathcal{O}} = \frac{2m-c}{2c^3}. \eeq 

\noindent The value of this invariant will determine whether $\mathcal{O}$ is a traversable wormhole, a one-way wormhole with a null throat, or a black bounce.

\newpage
\section{Conclusions} \label{sec:5}

We have shown that the throats of spherically symmetric dynamical wormholes defined by Hochberg and Visser (HV) are characterized by the same curvature invariant used to define the apparent horizon for spherically symmetric dynamical black holes. For spherically symmetric black holes, the apparent horizon is a geometric horizon (GH) and, in light of the GH conjectures \cite{Coley:2017vxb}, this is problematic, as a naive application of the conjectures might lead one to assume a HV wormhole throat is in fact a black hole's GH. This result indicates that additional curvature invariants are needed to distinguish between the relevant hypersurfaces for each of the two classes of solutions. 

By relaxing the condition that the ingoing and outgoing null directions must be geodesic and normal to the relevant surface in the definitions of the HV wormhole throat and the apparent horizon, we have given an alternative definition of these surfaces in terms of curvature invariants. The relevant curvature invariants are first constructed using an invariantly defined frame generated by the Cartan-Kalrhede algorithm and are hence Cartan invariants. However, as the calculation of the Cartan invariants relies on a particular choice of frame, we have also determined a set of rational curvature invariants constructed from the set of Carminati-McLenaghan invariants. This set of invariants will distinguish between the HV wormhole throat and the GH and has the added advantage that the curvature invariants are independent of the choice of frame basis. 

Inspired by the use of curvature invariants to distinguish between the HV wormhole throat and the GH in spherically symmetric dynamical metrics, we have introduced the concept of a geometric surface as a generalization of a GH and defined an invariant Hochberg-Visser (IHV) wormhole throat as another example of a geometric surface. For both the GH and IHV wormhole throats, we have required that the ingoing and outgoing directions $\ell^a$ and $n^a$ be invariantly defined, and we have relaxed the requirement that $\ell^a$ and $n^a$ must be geodesic and normal to the relevant surface. From these definitions, we have suggested three new conjectures on geometric surfaces (GS) as an extension of the current GH conjectures. 

We note that in the case of spherical symmetry, the HV wormhole throat and apparent horizon coincide with the IHV wormhole throat and GH, respectively. This may not be the case in less symmetric wormhole and black hole solutions, and it is expected that these surfaces will differ for more realistic solutions. For example, in the Kastor-Traschen multi-black hole solution, the GH and the apparent horizon are different surfaces \cite{NSH, AD}. In future work, we hope to investigate the validity of these conjectures for axisymmetric examples such as rotating dynamical black holes and  rotating traversable wormholes.

\section*{Acknowledgements}
We would like to Bahram Shakerin, Cooper Watson, Jeff Lee, and Eric Davis for their helpful discussions.

\bibliographystyle{unsrt-phys}
\bibliography{DynamicWormholes}

\end{document}